\documentclass[preprint,12pt]{elsarticle}

\usepackage{amsmath,amssymb,amsbsy,amstext,amscd,amsfonts}
\usepackage{graphicx, color}
\usepackage{geometry}
\usepackage{subfigure}

\usepackage{float}

\setlength{\textheight}{24cm} \setlength{\textwidth}{14cm}
\setlength{\topmargin}{-15mm} \setlength{\oddsidemargin}{9mm}
\parindent18pt
\parskip2.4mm


\newcommand{\bR}{\mathbb{R}}
\newcommand{\bC}{\mathbb{C}}
\newcommand{\bZ}{\mathbb{Z}}
\newcommand{\bN}{\mathbb{N}}

\newcommand{\cV}{\mathcal{V}}
\newcommand{\cG}{\mathcal{G}}

\newcommand{\Rint}{\mathring{R}}

\newtheorem{theorem}{\noindent {\rm \bf Theorem}}[section]

\newtheorem{definition}[theorem]{\noindent {\rm \bf Definition}}

\newenvironment{proof}{\begin{trivlist}\item[]{\sl Proof.\/\ }}
                      {\hfill $\Box$
                      \end{trivlist}}

\journal{arXiv.org}

\begin{document}
\begin{frontmatter}

\title{Half-plane diffraction problems on a triangular lattice}

\author[rmi,fre]{D. Kapanadze \corref{cor1}}
\ead{david.kapanadze@gmail.com}

\author[rmi]{E. Pesetskaya }
\ead{kate.pesetskaya@gmail.com}

\cortext[cor1]{Corresponding author}
\address[rmi]{A. Razmadze Mathematical Institute, TSU, Merab Aleksidze II Lane 2, Tbilisi 0193, Georgia}
\address[fre]{Free University of Tbilisi, Tbilisi 0159, Georgia}

\begin{abstract}
We investigate thin-slit diffraction problems for two-dimensional lattice waves. The peculiar structure allows us to consider the problems on the semi-infinite triangular lattice, consequently, we study Dirichlet problems for the two-dimen-sional discrete Helmholtz equation in a half-plane. In view of the existence and uniqueness of the solution, we provide new results for the real wave number $k\in (0,3)\backslash\{2\sqrt{2}\}$ without passing to the complex wave number and derive an exact representation formula for the solution. For this purpose, we use the notion of the radiating solution. Finally, we propose a method for numerical calculation. The efficiency of our approach is demonstrated in an example related to the propagation of wave fronts in metamaterials through two small openings.
\end{abstract}

\begin{keyword}
discrete Helmholtz equation, Dirichlet boundary value problem, half-plane diffraction, metamaterials, triangular lattice model
\end{keyword}

\end{frontmatter}

\section{Introduction}

Wave propagation through discrete structures remains an active area of research today.  The triangular lattice is one of the five two-dimensional Bravais lattice types and appears naturally in applications. For example, close-packed planes  occur frequently in some kinds of crystals in the form of triangular lattice \cite{BH54, Bu66}. Besides, one can consider two-dimensional passive propagation media, a host microstrip line network periodically loaded with series capacitors and shunt inductors for signal processing and filtering (as shown in Figure \ref{Fig_1}). This type of inductor-capacitor lattice is referred to a negative-refractive-index transmission-line (NRI-TL) metamaterial \cite{CI} or, simply, left-handed 2D metamaterial.

Motivated by applications of recent interest related to analog circuits, crystalline materials, and metamaterials, we investigate thin-slit diffraction problems for triangular lattices. The peculiar structure allows us to consider the problems on the semi-infinite triangular lattice. For this lattice, we study Dirichlet problems for the two-dimensional discrete Helmholtz equation in a half-plane (mathematically formulated in the next section). When one is interested in the analysis of regular processes in which waves corresponding to the microstructural scales can be neglected, then the continuum limit of corresponding equations can be investigated. In this case, we arrive at the famous problem in applied mathematics – the half-plane diffraction problem. It is well known that the classical continuum model of wave diffraction can be considered only as the slowly-varying approximation of a discrete or structured material. Nowadays the understanding of nanostructure and microstructure phenomena in modern materials and composites in critical conditions is of crucial importance. It turns out that a discrete structure provides an effective way to describe microstructural processes and influences within, cf., e.g., \cite{Br, CI, Do, Sl}. Therefore, there is an obvious necessity to avoid continuum limits and instead directly analyze the discrete Helmholtz diffraction problems. Although the similar problem for square lattice has been studied in \cite{BO}, see also \cite{DK3}, its extension to a triangular lattice model is not direct.

Our main interest lies in the investigation of the problems with wave number $k$ within the pass-band which is an arbitrary non-zero complex number in general. In this paper, we use the well approved approach to provide new results for a real wave number $k\in (0,3)\backslash\{2\sqrt{2}\}$ without passing to a complex wave number.
In order to have the unique solvability results and an exact representation formula for the solution, we use the notion of the radiating solution \cite{DK2} and the method of images to construct the Dirichlet Green’s function for the half-plane. Notice that the cases with the wave numbers within the stop-band is mathematically more simple, they do not need the radiation conditions and is not considered here. Finally, we propose a method for numerical calculation. The efficiency of our approach is demonstrated in an example related to the propagation of wave fronts in metamaterials through two small openings.

\section{Basic notations and formulation of the problem}

Following the customary notation in mathematics, let $\bZ$, $\bZ^+$, $\bZ^-$, $\bN$, $\bR$, and $\bC$ denote the set of integers, positive integers, negative integers, non-negative integers, real numbers and complex numbers, respectively. We denote by $e_1=(1,0)$, $e_2=(0,1)$ the standard base of $\bZ^2$ ($=\bZ\times \bZ$).

Consider a two-dimensional infinite triangular lattice $\mathfrak{T}$ defined as a periodic simple graph $\{\mathcal{V}, \mathcal{E}\}$, where
\[
\mathcal{V}=\{(x_1+x_2/2,\sqrt{3}x_2/2)\subset \mathbb{R}^2 : (x_1,x_2)\in\mathbb{Z}^2\}
\]
is a vertex set, and $\mathcal{E}$ is an edge set, whose endpoints $(a,b)\in \mathcal{V}\times \mathcal{V}$ are adjacent points, i.e., $\mid a-b\mid=1$, cf. Figure \ref{Fig_1}. The time-harmonic discrete waves in $\mathfrak{T}$ can be described by solutions of the discrete Helmholtz equation

\begin{figure}[t!]  
\subfigure{\includegraphics[scale=0.42]{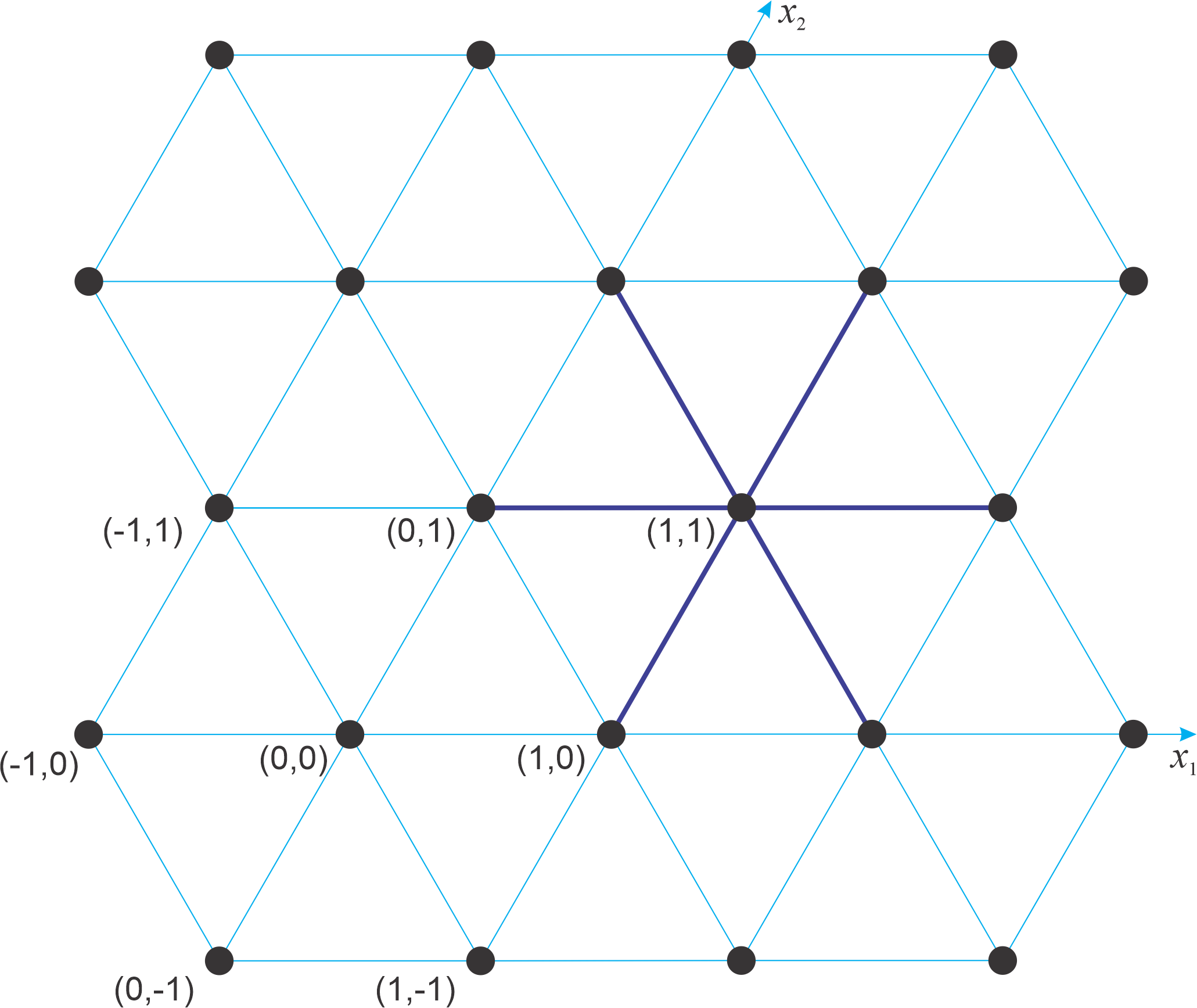}}
\subfigure{\raisebox{18mm}{\includegraphics[scale=0.50]{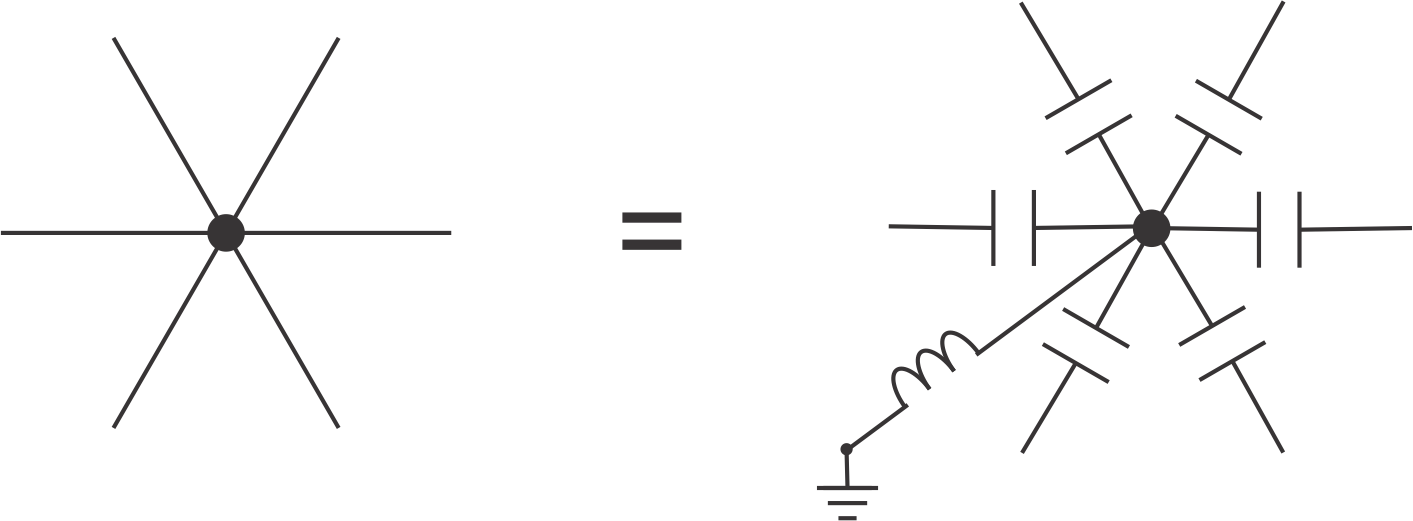}}}
\caption{Triangular lattice. Connection between $x=(x_1,x_2)\in \bZ^2$ and the Euclidean coordinates of the vertexes (black dots) is established via  $(x_1,x_2) \to (x_1+x_2/2,\sqrt{3}x_2/2)$. The nearest neighbor interactions based on the triangular lattice are shown with thick blue lines. Triangular lattice can be viewed as a left-handed 2D inductor-capacitor metamaterial. A host transmission-line is loaded periodically with series capacitors and shunt inductors.}
\label{Fig_1}
\end{figure}

\begin{equation}\label{eq:Helmholtz}
(\Delta_d + k^2)u(x)=g(x), \quad x=(x_1,x_2)\in \bZ^2.
\end{equation}
Here, $\Delta_d$ denotes the discrete (a 7-point) Laplacian defined as follows
\begin{equation*}
\begin{aligned}
\Delta_d u(x)=&u(x+e_1)+u(x-e_1)+u(x+e_2)+u(x-e_2)\\
&+u(x+e_1-e_2)+u(x-e_1+e_2)-6u(x),
\end{aligned}
\end{equation*}
where $g$ is a function with bounded support, i.e., $g\in C_0(\bZ^2)$.

 Let $\Omega=\mathring{\Omega}\cup \Gamma$ be a discrete half-plane in $\bZ^2$, where
$\mathring{\Omega}:=\mathbb{Z}\times \mathbb{\bZ^+}$ and $\Gamma := \partial \Omega =\{(x_1,0) : x_1\in \mathbb{Z}\}$.
We consider the Dirichlet poblem for the discrete Helmholtz equation on the semi-infinite lattice:

\begin{subequations}
\begin{align}
(\Delta_d +k^2)u(x) &= 0,\quad\quad\  \ \textup{in}\  \mathring{\Omega}, \label{eq:H1} \\
u(y)&= f(y),\quad \textup{on}\  \Gamma. \label{eq:H2}
\end{align}
\end{subequations}
Here, $f$ is a given function supported on a finite subset of $\Gamma$.

Thus, we are interested in studying the problem of the existence and uniqueness of a function $u:\Omega\to \bC$ such that $u$ satisfies the discrete Helmoltz equation \eqref{eq:H1} with $k \in (0,3)\backslash \{2\sqrt{2}\}$ and the boundary condition \eqref{eq:H2}. From now on we will refer to this problem as Problem $\mathcal{P}_{\mathrm{D}}$.

\section{Lattice Green's function}

Denote by $\cG(x;y)$ the Green's function for the discrete Helmholtz equation \eqref{eq:Helmholtz} centered at $y$ and evaluated at $x$. Then the function $\cG(x;y)$ satisfies
\begin{equation}\label{eq:greens}
(\Delta_d+k^2)\cG(x;y)=\delta_{x,y},
\end{equation}
where $\delta_{x,y}$ is the Kronecker delta. For brevity, we use the notation $\cG(x)$ for $\cG(x;0)$. Notice that
$\cG(x;y)=\cG(x-y)$.

Using the discrete Fourier transform and the inverse Fourier transform
we get
\begin{equation}\label{eq:gmn}
\cG(x)=\frac{1}{4\pi^2}\int_{-\pi}^{\pi}\int_{-\pi}^{\pi}\frac{e^{\iota(x\cdot \xi)}}{\sigma(\xi;k)}d\xi, \quad \xi=(\xi_1,\xi_2),
\end{equation}
where
\begin{equation}
\begin{aligned}
\sigma(\xi;k^2)&=e^{\iota\xi_1}+e^{-\iota\xi_1}+e^{\iota\xi_2}+e^{-\iota\xi_2}+e^{\iota\xi_1}e^{-\iota\xi_2}+e^{-\iota\xi_1}e^{\iota\xi_2}-6+k^2\\
&=k^2-6+2\cos\xi_1+2\cos\xi_2+2\cos(\xi_1-\xi_2).
\end{aligned}
\end{equation}

The lattice Green's function $\mathcal{G}$ is quite well known when $k^2\in \bC\backslash[0, 9]$ (cf., e.g., \cite{Ho}).
Notice that if $k^2\in \bC\backslash[0, 9]$ then $\sigma\neq 0$ and, consequently, $\mathcal{G}$ in \eqref{eq:gmn} is well defined. In this case
$\mathcal{G}(x)$ decays exponentially when $\mid x \mid\to\infty$. Moreover, we have
\begin{equation}\label{eq:lateq}
\mathcal{G}(x_1,x_2)=\mathcal{G}(x_2,x_1)=\mathcal{G}(-x_1,-x_2)=\mathcal{G}(x_1+x_2,-x_2)
\end{equation}
for all $x=(x_1,x_2) \in \bZ^2$.

Let us show $\mathcal{G}(x_1,x_2)=\mathcal{G}(x_1+x_2,-x_2)$ while other identities are trivial.
According to \eqref{eq:gmn}, we have
\begin{eqnarray*}
\mathcal{G}(x_1+x_2,-x_2) & = & \frac{1}{4\pi^2}\int_{-\pi}^{\pi}\int_{-\pi}^{\pi}\frac{e^{\iota x_1\xi_1 +\iota x_2(\xi_1-\xi_2)}}{\sigma(\xi_1,\xi_2;k)}d\xi_1 d\xi_2 \\
& = &  \frac{1}{4\pi^2}\int_{-\pi}^{\pi}\int_{\eta_1-\pi}^{\eta_1+\pi}\frac{e^{\iota x_1\eta_1 +\iota x_2\eta_2}}{\sigma(\eta_1,\eta_2;k)}d\eta_2 d\eta_1
\end{eqnarray*}
with $\eta_1=\xi_1, \, \eta_2=\xi_1-\xi_2$.
The following equality can be easily obtained by changing the variable $\eta_2$ to $\eta_2-2\pi$
\[
\int_{0}^{\pi}\int_{\pi}^{\eta_1+\pi}\frac{e^{\iota x_1\eta_1 +\iota x_2\eta_2}}{\sigma(\eta_1,\eta_2;k)}d\eta_2 d\eta_1=
\int_{0}^{\pi}\int_{-\pi}^{\eta_1-\pi}\frac{e^{\iota x_1\eta_1 +\iota x_2\eta_2}}{\sigma(\eta_1,\eta_2;k)}d\eta_2 d\eta_1.
\]
 The factor $e^{\iota x_2 2\pi}$ is equal to $1$ since $x_2\in \mathbb{Z}$. Similar arguments give us
\[
\int_{-\pi}^{0}\int_{\eta_1-\pi}^{-\pi}\frac{e^{\iota x_1\eta_1 +\iota x_2\eta_2}}{\sigma(\eta_1,\eta_2;k)}d\eta_2 d\eta_1=
\int_{-\pi}^{0}\int_{\eta_1+\pi}^{\pi}\frac{e^{\iota x_1\eta_1 +\iota x_2\eta_2}}{\sigma(\eta_1,\eta_2;k)}d\eta_2 d\eta_1.
\]
Taking into account the last two equalities, we get
\[
\mathcal{G}(x_1+x_2,-x_2) = \frac{1}{4\pi^2}\int_{-\pi}^{\pi}\int_{-\pi}^{\pi}\frac{e^{\iota x_1\eta_1 +\iota x_2\eta_2}}{\sigma(\eta_1,\eta_2;k)}d\eta_2 d\eta_1 = \mathcal{G}(x_1,x_2).
\]

In the case $k^2\in (0, 9)\backslash\{8\}$ the expression \eqref{eq:gmn} is understood as follows: we replace $k^2$ by $k^2+\iota\varepsilon$ with $0<\varepsilon\ll 1$ and let $\varepsilon\to 0$ at the end of the calculation, cf. \cite{DK2}. Thus, we define the lattice Green's function for $k\in (0, 3)\backslash\{2\sqrt{2}\}$ as a pointwise limit of
\[
(R_{\lambda+\iota\varepsilon}\delta_{x,0})(x):=\frac{1}{4\pi^2}\int_{-\pi}^{\pi}\int_{-\pi}^{\pi}\frac{e^{\iota x\cdot\xi }d\xi_1 d\xi_2}{\sigma(\xi;k^2+\iota\varepsilon)}
\]
as $k^2+\iota\varepsilon \to k^2+\iota 0$ and denote it again by $\mathcal{G}(x)$, i.e., $\mathcal{G}(x)=(R_{\lambda+\iota 0}\delta_{x,0})(x) $. Clearly, $\mathcal{G}(x)$ is a solution to equation \eqref{eq:greens} and satisfies equalities  \eqref{eq:lateq}.

\section{Unique solvability result}

In order to simplify further arguments let us introduce following vectors:
\begin{equation*}
\begin{aligned}
e_1=(1,0),\quad e_2=(0,1),\quad e_3=e_1-e_2,\quad e_4=-e_1,\quad e_5=-e_2,\quad e_6=-e_3.
\end{aligned}
\end{equation*}
For any point $x\in\bZ^2$ we define the 6-neighbourhood $F^0_x$ as the set of points $\{x+e_j: j=1,...,6\}$ and the neighbourhood $F_x$ as $F^0_x\bigcup\{x\}$. We say that $R\subset \bZ^2$ is a region if there exist disjoint nonempty subsets $\Rint$ and $\partial R$ of $R$ such that
\begin{itemize}
\item[(a)] $R=\Rint\cup \partial R$,
\item[(b)] if $x\in \mathring{R}$ then $F_x\subset R$,
\item[(c)] if $x\in\partial R$ then there is at least one point $y\in F^0_x$ such that $y\in \Rint$.
\end{itemize}
Clearly, the subsets $\Rint$ and $\partial R$ are not defined uniquely by $R$, but henceforth, it will be always assumed that $\Rint$ and $\partial R$ are also given and fixed for a given region $R$ in $\bZ^2$. We also say that $x$ is an interior (boundary) point of $R$ if $x\in \Rint$ ($x\in \partial R$).

Denote by $(\partial R)_j$, $j=1,...,6$, a set of all boundary points $y\in\partial R$ such that $y-e_j\in \Rint$ and call it the sides of the boundary $R$. Note that a boundary point $x$ can simultaneously belong to all six sides of $R$. Thus, these sides may overlap each other. Clearly, $\partial R$ is the union of its six sides, $\partial R=\cup_{j=1}^6 (\partial R)_j$.

As it was already mentioned above, $y\in\partial R$ may be a point of intersection of several sides of $\partial R$. However, in our arguments presented below it will be always clear which side is needed to be considered. Under this condition, we define the discrete derivative in the outward normal direction $e_j$, $j=1,\dots,6$,
\[
Tu(y)=u(y)-u(y-e_j), \quad y\in(\partial R)_j.
\]

Let us introduce the following set $H_0=\{(0,0)\}$ and then define $H_{N}$, $N\in\mathbb{N}$, with the help of recurrence formula
\[
H_{N}:=\bigcup_{x\in H_{N-1}}F_{x}
\]
with $\mathring{H}_N:=H_{N-1}$ and $(\partial H)_{N}:=H_{N}\backslash \mathring{H}_N$.

Let $R$ be a finite region. Representing $R=\bigcup_{x\in \Rint}{F_x}$, we can easily derive a discrete analogue of Green's second identity
\begin{equation}\label{eq:GRsecond}
\sum_{x\in \Rint}(u(x)\Delta_d v(x)-v(x)\Delta_d u(x))=\sum_{y\in \partial R} (u(y)Tv(y)-v(y)Tu(y)).
\end{equation}

Now let us give a definition of a radiating solution on the discrete half-plane. First, we consider the case $k^2\in (0,8)$.
We say that $u:\Omega \to\bC$ satisfies the radiation condition at infinity if
\begin{equation}
\label{eq:radcond}
\left\{\begin{aligned} u(x)&=O(\mid x\mid^{-\frac12}),\\
u(x+e_j)&=e^{\iota \xi^*_j(\alpha,k)} u(x)+O(\mid x\mid^{-\frac32}),\quad j=1,2,
\end{aligned}
\right.
\end{equation}
with the remaining term decaying uniformly in  all directions $x/\mid x\mid$, where $x$ is characterized as $x_1=\mid x\mid\cos\alpha$,  $x_2=\mid x\mid\sin\alpha$, $0\le \alpha \le \pi$. Here, $\xi^*_j(\alpha,k)$ is the $j$th coordinate of the point $\xi^*(\alpha, k)$.

For $k^2\in (8,9)$ we say that $u:\Omega\to\bC$ satisfies the radiation condition at infinity if it can be represented as a sum of functions $u_1$ and $u_2$ such that each of them satisfies
\begin{equation}
\label{eq:radcond2}
\left\{\begin{aligned} u_i(x)&=O(\mid x\mid^{-\frac12}),\\
u_i(x+e_j)&=e^{\iota \xi^{*,i}_j(\alpha,k)} u_i(x)+O(\mid x\mid^{-\frac32}),\quad i,j=1,2,
\end{aligned}
\right.
\end{equation}
with the remaining term decaying uniformly in  all directions $x/\mid x\mid$, where $x$ is characterized as $x_1=\mid x\mid\cos\alpha$,  $x_2=\mid x\mid\sin\alpha$, $0\le \alpha \le \pi$. Here, $\xi^{*,i}_j(\alpha,k)$ is the $j$th coordinate of the point $\xi^{*,i}(\alpha, k)$, $i=1,2$.

It is shown in \cite{DK2} that $\mathcal{G}(x)$ satisfies the radiation conditions introduced above. Now we are ready to introduce a notion of a radiating solution to the discrete Helmholtz equation \eqref{eq:Helmholtz}.

\begin{definition} Let $k^2\in (0,8)$. A solution $u$ to the discrete Helmholtz equation \eqref{eq:H1} is called radiating if it satisfies the radiation condition \eqref{eq:radcond}.

Let $k^2\in (8,9)$. A solution $u$ to the discrete Helmholtz equation \eqref{eq:H1} is called radiating if it satisfies the radiation condition \eqref{eq:radcond2}.
\end{definition}

We represent the Dirichlet Green's function for the half-plane as follows:
\[
\mathcal{G}^+(x;y)=\mathcal{G}(x;y)-\mathcal{G}(\hat{x};y), \quad y\in \bZ^2,
\]
with $x=(x_1,x_2)\in \Omega$ and $\hat{x}=(x_1+x_2,-x_2)$. Notice that  $\mathcal{G}^+(x;y)$ can be represented equivalently as
\[
\mathcal{G}^+(x;y)=\mathcal{G}(x;y)-\mathcal{G}(x;\hat{y}).
\]
Indeed, using the property \eqref{eq:lateq}, we have
\begin{equation}\label{eq:Greenyhat}
\mathcal{G}(\hat{x};y)=\mathcal{G}(x_1+x_2-y_1,-x_2-y_2)=\mathcal{G}(x_1-y_1-y_2, x_2+y_2)=\mathcal{G}(x; \hat{y}).
\end{equation}
Since $\widehat{(y_1,0)}=(y_1,0)$, we immediately get $\mathcal{G}^+(x;y)=0$ for $y \in \Gamma$. Moreover,
\[
(\Delta_d+k^2)\mathcal{G}^+(x;y)=(\Delta_d+k^2)\mathcal{G}(x;y)-(\Delta_d+k^2)\mathcal{G}(\hat{x};y)=\delta_{x,y}-\delta_{\hat{x},y}=\delta_{x,y}
\]
for all $x\in \mathring{\Omega}$ and $y\in \Omega$. From \eqref{eq:Greenyhat} we see that $\mathcal{G}^+(x;y)$ satisfies the radiation condition.  Indeed,
for any fixed $y\in \Gamma$ the angle $\alpha$ from the radiation conditions \eqref{eq:radcond} and \eqref{eq:radcond2} is the same for $\mathcal{G}(x;y)=\mathcal{G}(x-y)$ and $\mathcal{G}(x;\hat{y})=\mathcal{G}(x-\hat{y})$ when $\mid x\mid \to \infty$.

\begin{theorem}\label{th:repr}
Let $k \in (0,3)\backslash \{2\sqrt{2}\}$, and $u$ be a given function $\Omega \to \bC$ that satisfies the radiation condition. The function $u\mid_{\Gamma}$ has a finite support. Then, for any point $x\in \mathring{\Omega}$, we have the following representation formula
\[
u(x)=\sum_{y\in \Gamma}\big(u(y)T\cG^+(x;y)-\cG^+(x;y)Tu(y)\big)+\sum_{y\in \mathring{\Omega}}\cG^+(x;y)(\Delta_d+k^2) u(y).
\]
In particular, if $u$ is a solution to the discrete Helmholtz equation
\[
(\Delta_d+k^2)u(x)=0\quad \textup{in}\  \mathring{\Omega},
\]
then
\begin{equation}\label{eq:repr}
u(x)=\sum_{y\in \Gamma}u(y)T\cG^+(x;y).
\end{equation}
\end{theorem}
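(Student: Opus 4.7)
The proof proceeds by applying the discrete Green's second identity \eqref{eq:GRsecond} to $u$ and $v(y) := \cG^+(x;y)$ on a truncated region $R_N := \Omega \cap H_N$ that exhausts $\Omega$ as $N \to \infty$, then controlling the outer-boundary contribution via the radiation conditions. Fix $x \in \mathring{\Omega}$ and choose $N$ so large that $x$ is an interior point of $R_N$ and the (finite) support of $u\mid_{\Gamma}$ is contained in $R_N$. The boundary $\partial R_N$ decomposes into a flat part lying on $\Gamma$ and an outer arc $\cA_N \subset \mathring{\Omega}$ consisting of $O(N)$ points at distance of order $N$ from the origin.

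Two preliminary observations justify the use of $v$ in Green's identity. First, \eqref{eq:lateq} together with the computation \eqref{eq:Greenyhat} gives the symmetry $\cG^+(x;y) = \cG^+(y;x)$, so viewed as a function of $y$, $v$ satisfies $(\Delta_d + k^2) v(y) = \delta_{x,y}$ on $\mathring{\Omega}$, vanishes for $y \in \Gamma$, and satisfies the radiation condition (as observed in the paragraph preceding the theorem). Second, substituting $v$ into \eqref{eq:GRsecond} and writing $\Delta_d = (\Delta_d + k^2) - k^2$, the $k^2 uv$ terms cancel in the interior sum, giving
\[
u(x) - \sum_{y \in \mathring{R}_N} \cG^+(x;y)(\Delta_d + k^2) u(y) = \sum_{y \in \partial R_N} \bigl( u(y) T \cG^+(x;y) - \cG^+(x;y) T u(y) \bigr).
\]
On the flat part of $\partial R_N$ the term $\cG^+(x;y) T u(y)$ vanishes because $\cG^+(x;y) = 0$ for $y \in \Gamma$; for $N$ large, the surviving portion already equals $\sum_{y \in \Gamma} u(y) T \cG^+(x;y)$.

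The main obstacle is to show that the contribution over $\cA_N$ tends to zero. Using the elementary identity
\[
u(y) T v(y) - v(y) T u(y) = v(y) u(y - e_j) - u(y) v(y - e_j), \qquad y \in (\partial R_N)_j,
\]
together with the expansions $u(y - e_j) = e^{-\iota\xi^*_j(\alpha,k)} u(y) + O(|y|^{-3/2})$ and its analogue for $v$ inherited from \eqref{eq:radcond}, the leading $O(|y|^{-1})$ terms carry the common phase $e^{-\iota\xi^*_j(\alpha,k)}$ and cancel, leaving a pointwise remainder of order $O(|y|^{-1/2}) \cdot O(|y|^{-3/2}) = O(N^{-2})$ at each arc point. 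Summing over $|\cA_N| = O(N)$ yields an overall contribution of order $O(N^{-1}) \to 0$. For $k^2 \in (8,9)$ one decomposes both $u$ and $v$ into the two-component form of \eqref{eq:radcond2}; the diagonal pairings $(u_i, v_i)$ cancel as above, while the mixed pairings must be controlled by an additional oscillatory-sum argument using the distinctness of the phases $\xi^{*,1}$ and $\xi^{*,2}$. This is the most delicate step of the proof.

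Letting $N \to \infty$ produces the first representation formula. The specialization to homogeneous solutions is then immediate: the interior sum vanishes identically, and the $\cG^+(x;y) T u(y)$ term in the boundary sum again vanishes by $\cG^+(x;y) = 0$ on $\Gamma$, leaving precisely \eqref{eq:repr}.
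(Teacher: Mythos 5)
Your proposal is correct in substance and rests on the same analytic core as the paper's proof --- the cancellation of the far-field boundary sum via the radiation condition --- but it is organized differently. The paper extends $u$ by zero to all of $\bZ^2$, applies Green's identity \eqref{eq:GRsecond} on the full symmetric regions $H_N$, and invokes the whole-plane limit argument of Theorem 5.2 in \cite{DK2} verbatim to obtain $\tilde{u}(x)=\sum_{y\in\bZ^2}\cG^+(x;y)(\Delta_d+k^2)\tilde{u}(y)$; the boundary sum over $\Gamma$ then emerges algebraically from the fact that $(\Delta_d+k^2)\tilde{u}$ is supported near $\Gamma$ because of the jump of the zero-extension. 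You instead truncate the half-plane directly to $R_N=\Omega\cap H_N$, so the $\Gamma$-terms appear immediately as part of $\partial R_N$, and you must estimate the outer arc yourself; your computation $u(y)Tv(y)-v(y)Tu(y)=v(y)u(y-e_j)-u(y)v(y-e_j)=O(|y|^{-2})$ summed over $O(N)$ arc points is exactly the right mechanism, and your honest flagging of the cross-terms in the $k^2\in(8,9)$ case matches the delicate step the paper defers to \cite{DK2}. What the paper's route buys is the ability to reuse the published whole-plane theorem without redoing the arc estimate on a half-disc; what your route buys is a self-contained derivation in which the boundary terms on $\Gamma$ are transparent rather than recovered from a jump computation. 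One point you gloss over: each $y\in\Gamma$ lies simultaneously on the sides $(\partial R_N)_3$ and $(\partial R_N)_5$ (both $y+e_2$ and $y+e_2-e_1$ are interior), so the symbol $T$ in the flat-part sum really aggregates two directional differences --- this is precisely why the explicit solution formula \eqref{eq:HPsolution} of the next theorem contains both $\cG^+(x;y+e_2)$ and $\cG^+(x;y+e_2-e_1)$; since the theorem statement itself uses $T$ with the same implicit convention, this is a presentational caveat rather than a gap.
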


\begin{proof}
Denote by $\tilde{u}$ the extension of $u$ to $\bZ^2$ by zeros, i.e,
$\tilde{u}(x)=u(x)$ if $x\in \Omega$ and $\tilde{u}(x)=0$ if $x\in \bZ\times\bZ^-$. Then, for any finite region $H_N$, $N\in\mathbb{N}$, we apply Green's second identity \eqref{eq:GRsecond} where we take $v(y)=\cG^+(x;y)$. Here, note that $\cG^+$ is a fundamental solution for $y\in \mathring{\Omega}$, and
$\tilde{u}(y)=0$ for $y\in \bZ\times\bZ^-$. Thus, $u(y)(\Delta_d+k^2) v(y)$ disappears in the following identity
\[
u(y)\Delta_d v(y)-v(y)\Delta_d u(y)=u(y)(\Delta_d+k^2) v(y)-v(y)(\Delta_d+k^2) u(y)
\]
as far as $y\in \bZ\times\bZ^-$. For $x\in \mathring{H}_N$, we get
\begin{align*}
\tilde{u}(x)&=\sum_{y\in \mathring{H}_N}\tilde{u}(y)\delta_{x,y}\\
       &=\sum_{y\in \partial H_N} \big(\tilde{u}(y)T\cG^+(x;y)-\cG^+(x;y)T\tilde{u}(y)\big)+\sum_{y\in \mathring{H}_N}(\cG^+(x;y)(\Delta_d+k^2) \tilde{u}(y)).
\end{align*}

Passing to the limit $N\to \infty$, we use exactly the same arguments as in the proof of Theorem 5.2 from \cite{DK2} and, consequently, we get
\[
\tilde{u}(x) = \sum_{y\in \bZ^2}\cG^+(x;y)(\Delta_d+k^2) \tilde{u}(y).
\]
For the function $u$, we obtain
\[
u(x)=\sum_{y\in \Gamma}\big(u(y)T\cG^+(x;y)-\cG^+(x;y)Tu(y)\big)+\sum_{y\in \mathring{\Omega}}\cG^+(x;y)(\Delta_d+k^2) u(y).
\]
Taking into the account that $\cG^+(x;y)=0$ when $y\in \Gamma$, for a solution $u$ to the discrete Helmholtz equation we get the following quality
\[
u(x)=\sum_{y\in \Gamma}u(y)T\cG^+(x;y), \quad x\in \mathring{\Omega}.
\]
\end{proof}

Now we are ready to prove the unique solvability result for the discrete Helmholtz equation on the semi-infinite triangular lattice.
\begin{theorem}
Let $k \in (0,3)\backslash \{2\sqrt{2}\} $ then the Problem $\mathcal{P}_{\mathrm{D}}$ has a unique radiating solution $u$ which
can be represented as
\begin{equation}\label{eq:HPsolution}
u(x)=\sum_{y\in \Gamma}(\delta_{x,y}-\mathcal{G}^+(x;y+e_2)-\mathcal{G}^+(x,y+e_2-e_1))f(y).
\end{equation}
\end{theorem}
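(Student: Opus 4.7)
The plan is to derive uniqueness as an immediate consequence of Theorem~\ref{th:repr} and to prove existence by direct verification that the function defined by the right-hand side of~\eqref{eq:HPsolution} satisfies all three requirements of Problem~$\mathcal{P}_{\mathrm{D}}$: the Dirichlet data, the discrete Helmholtz equation on $\mathring{\Omega}$, and the radiation condition.

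For uniqueness, suppose $u_1$ and $u_2$ are two radiating solutions to Problem~$\mathcal{P}_{\mathrm{D}}$ with the same data $f$. Then $w:=u_1-u_2$ is a radiating solution of the homogeneous equation on $\mathring{\Omega}$ with $w|_{\Gamma}\equiv 0$, so in particular $w|_{\Gamma}$ has (empty) finite support. Applying \eqref{eq:repr} gives $w(x)=\sum_{y\in\Gamma}w(y)\,T\cG^+(x;y)=0$ for every $x\in\mathring{\Omega}$, which is the desired conclusion.

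For existence, let $u(x)$ denote the right-hand side of \eqref{eq:HPsolution}. I would verify three items in order. First, the boundary condition: if $x=(x_1,0)\in\Gamma$, then $\hat{x}=x$, hence $\cG^+(x;z)=\cG(x;z)-\cG(\hat{x};z)=0$ for every $z$, so the sum collapses to $u(x)=\sum_{y\in\Gamma}\delta_{x,y}f(y)=f(x)$. Second, the Helmholtz equation: fix $y\in\Gamma$ and apply $(\Delta_d+k^2)_x$ to the corresponding summand. The key observation is that for $y\in\Gamma$ the only indices $j\in\{1,\dots,6\}$ for which $y-e_j\in\mathring{\Omega}$ are $j=3$ (giving $y+e_2-e_1$) and $j=5$ (giving $y+e_2$); since $\delta_{x,y}=0$ for $x\in\mathring{\Omega}$, a direct computation yields $(\Delta_d+k^2)_x\delta_{x,y}=\delta_{x,y+e_2-e_1}+\delta_{x,y+e_2}$ on $\mathring{\Omega}$. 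Because $\cG^+$ is a fundamental solution satisfying $(\Delta_d+k^2)_x\cG^+(x;z)=\delta_{x,z}$ for $x\in\mathring{\Omega}$ and $z\in\Omega$, the two remaining terms contribute $-\delta_{x,y+e_2}$ and $-\delta_{x,y+e_2-e_1}$, and the three Dirac masses cancel. Third, the radiation condition: each $\cG^+(x;\cdot)$ is radiating by the remark following \eqref{eq:Greenyhat}, and since $f$ has finite support the sum in \eqref{eq:HPsolution} is a finite linear combination, so $u$ inherits \eqref{eq:radcond} or \eqref{eq:radcond2}.

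The main technical obstacle I anticipate is the second item. One must keep careful track of which sides $(\partial\Omega)_j$ a boundary point $y\in\Gamma$ belongs to (here only $j=3$ and $j=5$), and then recognise that the $\delta_{x,y}$ term in \eqref{eq:HPsolution} is precisely what cancels the spurious Dirac masses produced by the two Green's function terms at the first interior row of $\mathring{\Omega}$. In essence, this is a concrete unpacking of \eqref{eq:repr} in the half-plane geometry with the discrete normal derivative $T$ taken along both relevant sides of~$\Gamma$.
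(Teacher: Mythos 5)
Your proposal is correct and follows essentially the same route as the paper: uniqueness from the representation formula of Theorem~\ref{th:repr} applied to the difference of two solutions, and existence by directly checking the boundary condition (via $\cG^+(x;\cdot)=0$ for $x\in\Gamma$), the Helmholtz equation (via cancellation of the Dirac masses $\delta_{x,y+e_2}$ and $\delta_{x,y+e_2-e_1}$ produced by the delta term against those from the two Green's function terms), and the radiation condition from finiteness of the sum. The only cosmetic difference is that the paper verifies the equation by splitting into the cases $x_2>1$ and $x_2=1$, whereas you handle both at once with the identity $(\Delta_d+k^2)_x\delta_{x,y}=\delta_{x,y+e_2}+\delta_{x,y+e_2-e_1}$ on $\mathring{\Omega}$; the computations are equivalent.
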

\begin{proof} To prove the uniqueness result, it is sufficient to show that the corresponding homogeneous problem has only
the trivial solution. Let $u$ be a radiating solution to the homogeneous problem $\mathcal{P}_D$. Then Theorem \ref{th:repr} immediately implies $u(x)=0$ for all $x\in \mathring{\Omega}$.

To show the existence results, let us first check that $u(x)$ satisfies the boundary condition \eqref{eq:H2}. Since $\mathcal{G}^+(x;y)=0$ for all $x\in \Gamma$ and any $y$ we get
\[
u(x_1,0)=\sum_{y_1\in \mathbb{R}}\delta_{x_1,y_1}\delta_{0,0}f(y)=f(x_1).
\]
Now let us check that $u(x)$ satisfies the Helmholtz equation \eqref{eq:H1}. For $x_2>1$ all terms $(\Delta_d+k^2)\delta_{x,y}$,  $(\Delta_d+k^2)\mathcal{G}^+(x;y+e_2)=\delta_{x,y+e_2}$ and $(\Delta_d+k^2)\mathcal{G}^+(x;y+e_2-e_1)=\delta_{x,y+e_2-e_1}$ are equal to zero. Consequently, $(\Delta_d+k^2)u(x)=0$  for points $x=(x_1,x_2)$ with $x_2>1$.
It remains to consider the case $x_2=1$. By the direct calculation we have
\begin{eqnarray*}
  (\Delta_d+k^2)\delta_{x,y}f(y)&=&f(x_1)+f(x_1+1),\\
  (\Delta_d+k^2)\mathcal{G}^+(x;y+e_2)f(y)&=&f(x_1),\\
  (\Delta_d+k^2)\mathcal{G}^+(x;y+e_2-e_1)f(y)&=&f(x_1+1),
\end{eqnarray*}
and as a result $(\Delta_d+k^2)u(x)=0$. Thus, $u(x)$ is a solution to \eqref{eq:H1}. Since $\mathcal{G}^+$ satisfies the radiation condition and $f$ is supported on the finite subset of $\Gamma$ it follows that $u(x)$ is the unique radiating solution to the problem \eqref{eq:H1}, \eqref{eq:H2}.
\end{proof}

\section{Numerical results}

The main issue for numerical evaluation of the solution \eqref{eq:HPsolution} is to compute the lattice Green's function. For this purpose we apply  the method developed in \cite{BC}. Using 8-fold symmetry, we need only to compute the lattice Green's function $\cG(i,j)$ with $i\ge j \ge 0$. Following to \cite{BC}, let us introduce the vectors
$\cV_{2p}=(\cG(2p,0),\cG(2p-1,1),\dots,\cG(p,p))^\top$ and $\cV_{2p+1}=(\cG(2p+1,0),\cG(2p,1),\dots,\cG(p+1,p))^\top$ that collect all distinct Green’s functions $\cG(i,j)$ with ``Manhattan distances" $\mid i \mid + \mid j \mid$ of $2p$ and $2p+1$, respectively. For any Manhattan distance larger than 1, equation
\[
(\Delta_d+k^2)\cG(x)=\delta_{x,0}
\]
can be written in the matrix form $\gamma_n(k)\cV_n=\alpha_n(k)\cV_{n-1}+\beta_n(k)\cV_{n+1}$
where $\alpha_n(k)$, $\beta_n(k)$ and $\gamma_n(k)$ are sparse matrices (cf., Appendix A). Notice that only the
dimensions of these matrices depend on $n$. It is shown in \cite{BC} that, for any $n\ge 1$, we have
\[
\cV_n=A_n(k)\cV_{n-1},
\]
where the matrices $A_n(k)$ are defined by the following recurrence formula
\[
A_n(k)=[\gamma_n(k)-\beta_n(k)A_{n+1}]^{-1}\alpha_n(k).
\]
They can be computed starting from a sufficiently large $N$ with $A_{N+1}(k)=0$. Here, it is worth mentioning, that for $k=2$ we need to choose a better ``initial guess" than $A_{N+1}(k)=0$, since in this case $\mathrm{det}\gamma_n(k)=0$ and the matrix $\gamma_n(k)-\beta_n(k)A_{n+1}$ is not invertible.

Once $A_n(k)$ are known, we have $\cV_n=A_n(k)\dots A_1(k)\cV_{0}$, where
$\cV_{0}=\cG(0,0)$. In particular, $\cV_1=\cG(1,0)=A_1(k)\cG(0,0)$ which, together with
$6\cG(1,0)-(6-k^2)\cG(0,0)=1$, gives $\cG(0,0)=1/[6A_1(k)-6+k^2]$. This completes the calculation of the Green's function using elementary operations and no integrals. Notice also one more important advantage of this method. The $A_n(k)$ matrices are calculated coming down from asymptotically large Manhattan distances.  As they are propagated towards smaller Manhattan distances, it definitely gives us the physical solution.

Finally, we demonstrate our theoretical and numerical approaches for the diffraction problem on triangular lattice. For this purpose we take $k=\sqrt{2}$ and consider the wave $U(x_1,x_2,t)=\exp(\iota x_2 \pi/3-\iota\omega t)$ on the semi-infinite lattice $\mathbb{Z}\times \mathbb{Z}^-$ which encounters an obstacle at $x_2=0$ with two small openings formed by four nodes $\{(-11,0),(-10,0), (10,0), (11,0)\}$. It is easy to check that $\exp(\iota x_2 \pi/3)$ satisfies the discrete Helmholtz equation. Thus, our goal is to evaluate numerically a radiating solution of Problem $\mathcal{P}_{\mathrm{D}}$ where, for the given data, we take $f(\pm 10)=f(\pm 11)=1$ and  $f(m)=0$, $m\in\bZ^+\backslash \{-11,-10,10,11\}$.

\begin{figure}[t!]  
\subfigure[The density plot of $\mathrm{Re}\,\mathcal{G}$.]{\includegraphics[scale=0.45]{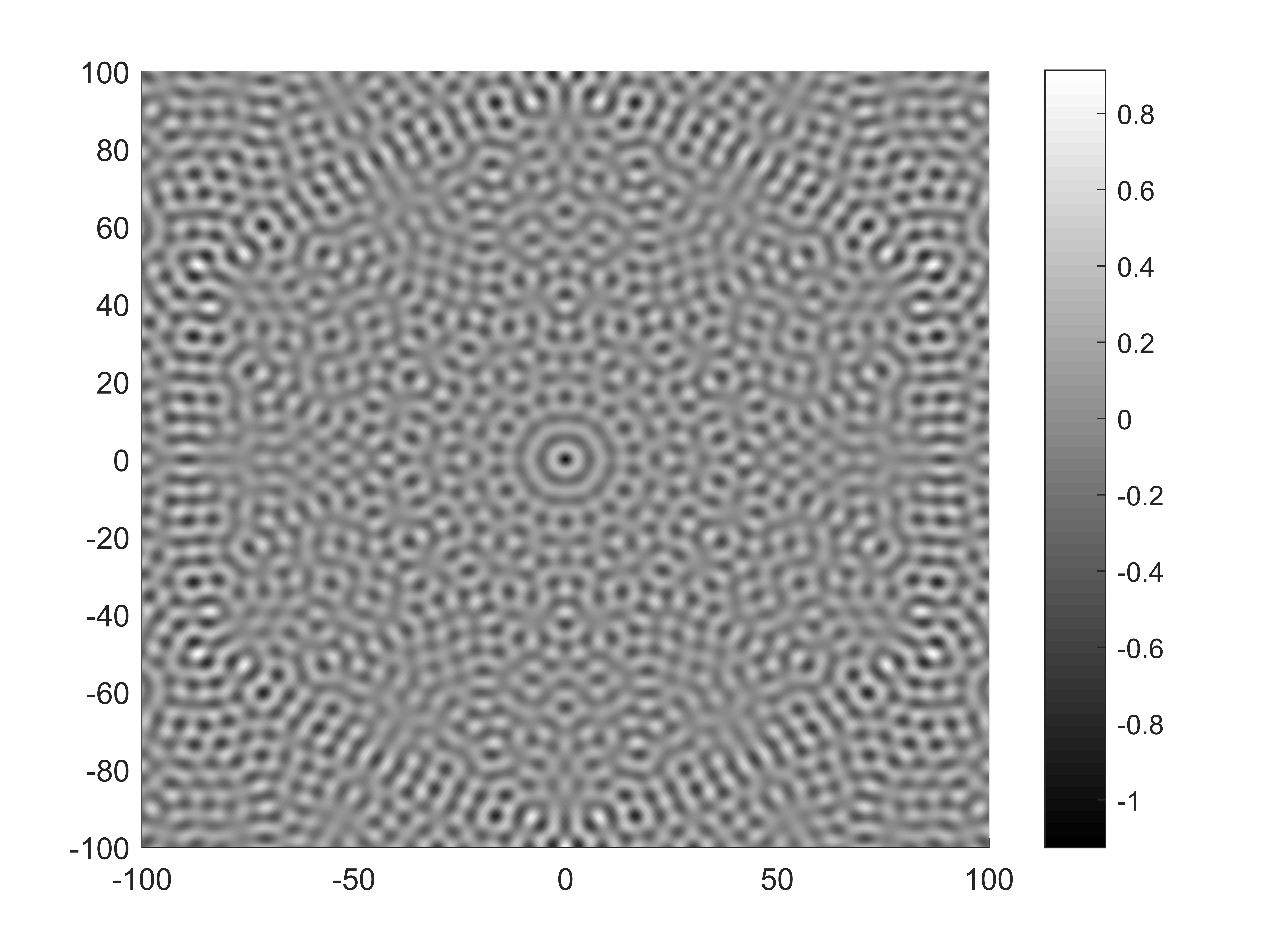}}
\subfigure[The density plot of $\mathrm{Re}\,u$.]{\includegraphics[scale=0.45]{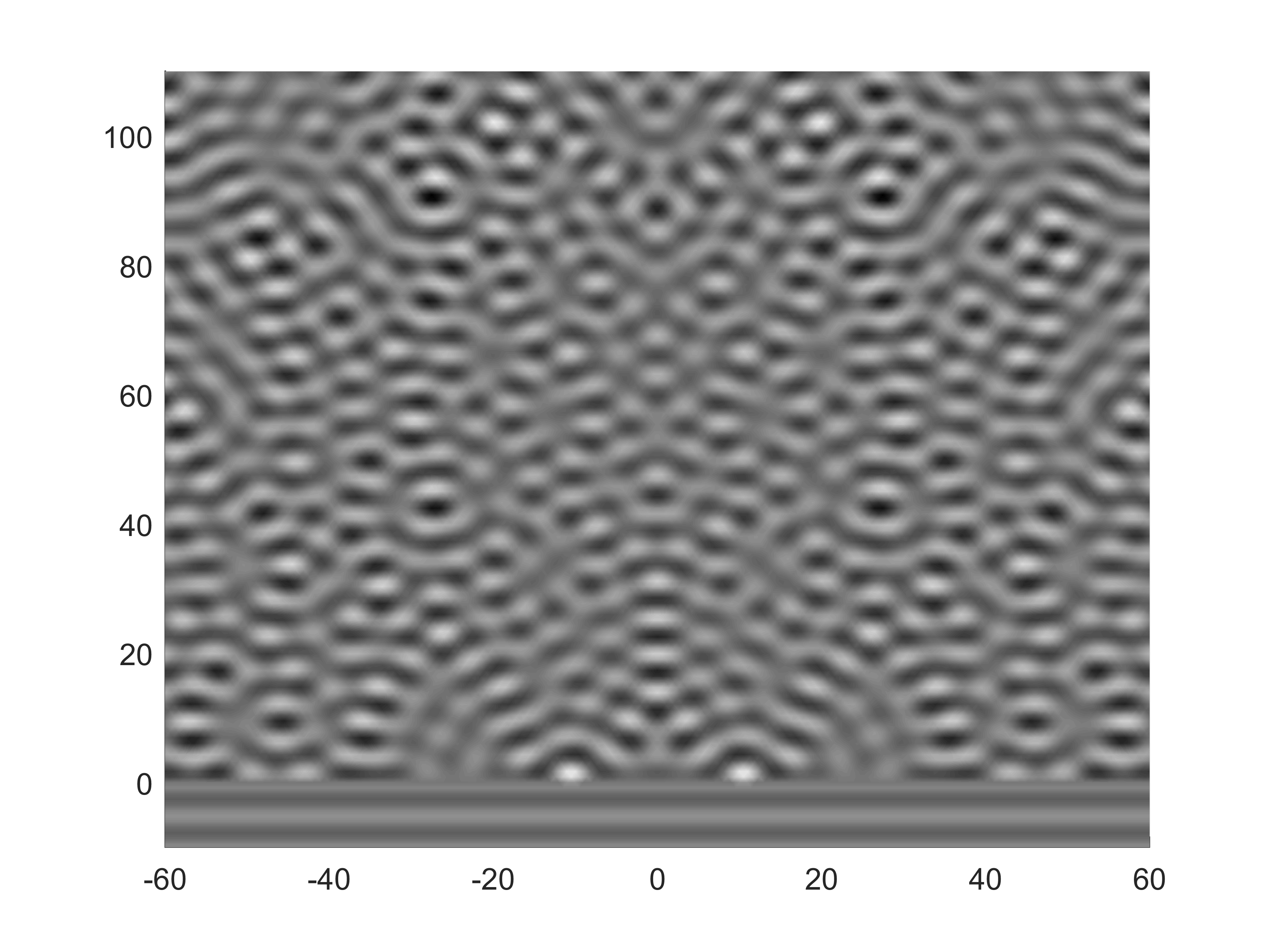}}
\begin{center}
\subfigure[The graph of $\mathrm{Re}\,u(\cdot,25\sqrt{3})$.]{\includegraphics[scale=0.6]{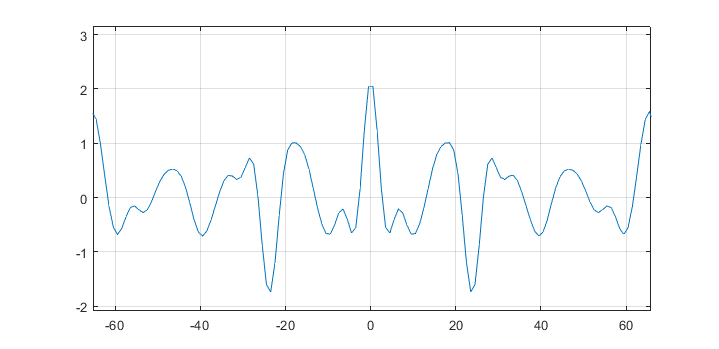}}
\end{center}
\caption{Plots are represented in original coordinates of the triangular lattice $\mathfrak{T}$. $k=\sqrt{2}$.}
\label{Fig_2}
\end{figure}

The results of numerical evaluations are plotted in Figure \ref{Fig_2} in original coordinates of $\mathfrak{T}$, where (a) shows the density plots of the Green's function $\mathrm{\Re e}\,\mathcal{G}$, (b) shows the density plots of $\cos(\iota x_2 \pi/3)$ on the lower half-plane and
$\mathrm{\Re e}\,u$ for the case $k=\sqrt{2}$ on the upper half-plane, and (c) presents the graph of $\mathrm{Re}\,u(\cdot,25\sqrt{3})$.  Some key features of the numerical solution can be immediately observed, namely, as expected, the symmetry of $\mathrm{\Re e}\,u$ and contributions from the wavefront that create a variable intensity.

\section{Discussion}

In this paper, we have constructed the discrete scattering theory for the two-dimensional discrete Helmholtz equation with a real wave number $k\in (0,3)\backslash\{2\sqrt{2}\}$ for the semi-infinite triangular lattices. The main objective was to prove the unique solvability result and derive an exact formula for the solution to a Dirichlet problem. For simplicity we restricted ourselves to compact boundary data. For non-compact boundary data we should require some extra conditions at infinity, cf. \cite{DK3}.

Similarly to the continuum theory, we used the notion of the radiating solution for the continuous Helmholtz equation
\begin{equation}\label{eq:Helmc}
\Delta u(x)+ \kappa^2 u(x)=0,\quad \kappa\in \mathbb{R}\backslash \{0\},\  x\in \mathbb{R}^2.
\end{equation}
Recall that a solution $u$ to the equation \eqref{eq:Helmc} is called radiating if it satisfies the Sommerfeld radiation condition
\begin{equation}\label{eq:zomm}
  \frac{\partial}{\partial \mid x \mid}u(x) - \iota \mid\kappa\mid u(x)=o(\mid x \mid ^{-\frac12})
\end{equation}
for $\mid x \mid\to \infty$ uniformly in all directions $x/\mid x \mid$. There are evident similarities and differences between conditions \eqref{eq:radcond}, \eqref{eq:radcond2} and \eqref{eq:zomm}.
For example, we see the same decay at infinity, but, in contrast to \eqref{eq:zomm}, the second condition of \eqref{eq:radcond} or \eqref{eq:radcond2} is anisotropic: the factor $e^{\iota \xi^*_j(\alpha,k)}$ is not a constant. Moreover, it is well known that the radiating solutions $u$ to the continuous Helmholtz equation automatically satisfy the Sommerfeld's finiteness condition, however this fact is not shown for the discrete case so far. Therefore, the first condition of \eqref{eq:radcond} or \eqref{eq:radcond2} is included in our definition.

In the present paper, the problems under consideration have an infinite boundary. Within the continuum framework it is well-known that, in general, when the surface is unbounded, we cannot neglect the contribution of that surface waves at infinity. In this case, the Sommerfeld radiation condition is no longer appropriate and a proper modification is needed. To the best of the authors' knowledge, different radiation conditions are provided only for the half-plane (and locally perturbed half-plane) problems, cf. \cite{CW, CW1, CWZ, CWZ1, DMN, DMN1}, and finding radiation conditions for arbitrary wedge-shaped regions remains open. In case of the square lattice we have proposed sufficient conditions for the given boundary data at infinity, which ensures to have an unique radiation solution to the corresponding problem, cf. \cite{DK3}. Here, in order to avoid further technical difficulties, we restricted ourselves with compactly supported data on the boundary.

Finally, let us note that comparing the results of problems in the continuous framework and results obtained in the continuum limit of discrete problems deserves the high interest of scientists, however, it is beyond the scope of this paper.

\section*{Acknowledgments}

This work was supported by Shota Rustaveli National Science Foundation of Georgia (SRNSFG) [FR-21-301]



\noindent

\section*{Appendix A Sparse matrices}\label{secA1}

The sparse matrices $\alpha_n(k)$, $\beta_n(k)$ and $\gamma_n(k)$ are defined as follows:
if $n=2p$  then  $\alpha_{2p}(k)$ is a $(p+1)\times p$ matrix such that
$\alpha_{2p}(k)\mid_{i,i}=1$, $i=\overline{1,p}$,
$\alpha_{2p}(k)\mid_{i,i-1}=1$, $i=\overline{2,p}$,
while $\alpha_{2p}(k)\mid_{p+1,p}=2$,
and all other matrix elements are zero. The $\beta_{2p}(k)$ is a $(p+1)\times (p+1)$ matrix
such that $\beta_{2p}(k)\mid_{i,i}=1$, $i=\overline{1,p}$,
$\beta_{2p}(k)\mid_{i,i+1}=1$, $i=\overline{2,p}$,
while $\beta_{2p}(k)\mid_{p+1,p+1}=\beta_{2p}(k)\mid_{1,2}=2$, and all other matrix elements are zero.
The $\gamma_{2p}(k)$ is a $(p+1)\times (p+1)$ matrix such that $\gamma_{2p}(k)\mid_{i,i}=6-k^2$, $i=\overline{1,p+1}$,
$\gamma_{2p}(k)\mid_{i,i+1}=\gamma_{2p}(k)\mid_{i,i-1}=-1$, $i=\overline{2,p}$, and $\gamma_{2p}(k)\mid_{1,2}=\gamma_{2p}(k)\mid_{p+1,p}=-2$.

If $n=2p+1$ then $\alpha_{2p+1}(k)$ is a $(p+1)\times (p+1)$ matrix such that
$\alpha_{2p+1}(k)\mid_{i,i}=1$, $i=\overline{1,p+1}$,
$\alpha_{2p+1}(k)\mid_{i,i-1}=1$, $i=\overline{2,p+1}$,
and all other matrix elements are zero. The $\beta_{2p+1}(k)$ is a
$(p+1)\times (p+2)$ matrix such that $\beta_{2p+1}(k)\mid_{i,i}=1$, $i=\overline{1,p+1}$,
$\beta_{2p+1}(k)\mid_{i,i+1}=1$, $i=\overline{2,p+1}$,
while $\beta_{2p+1}(k)\mid_{1,2}=2$, and all other matrix elements are zero.
The $\gamma_{2p}(k)$ is a $(p+1)\times (p+1)$ matrix such that $\gamma_{2p+1}(k)\mid_{i,i}=6-k^2$, $i=\overline{1,p}$,
$\gamma_{2p+1}(k)\mid_{p+1,p+1}=5-k^2$, while $\gamma_{2p+1}(k)\mid_{i,i+1}=-1$, $i=\overline{2,p}$, $\gamma_{2p+1}(k)\mid_{i,i-1}=-1$, $i=\overline{2,p+1}$, and $\gamma_{2p+1}(k)\mid_{1,2}=\gamma_{2p+1}(k)\mid_{p+1,p}=-2$. Finally, $\gamma_1(k)$ is a $1\times 1$ matrixs with an element $4-k^2$.

\end{document}